\DeclareFontFamily{OT1}{pzc}{}
\DeclareFontShape{OT1}{pzc}{m}{it}{<-> s * [1.10] pzcmi7t}{}
\DeclareMathAlphabet{\mathpzc}{OT1}{pzc}{m}{it}
\newtheorem{theorem}{Theorem}[section]
\newtheorem{corollary}[theorem]{Corollary}
\newtheorem{proposition}[theorem]{Proposition}
\newtheorem{remark}[theorem]{Remark}
\providecommand{\R}{\mathbb{R}}
\providecommand{\SO}{\mathbf{SO}}
\providecommand{\GL}{\mathbf{GL}}
\providecommand{\SE}{\mathbf{SE}}
\providecommand{\grpG}{\mathbf{G}}
\providecommand{\gothg}{\mathfrak{g}}
\providecommand{\gothX}{\mathfrak{X}} 
\providecommand{\se}{\mathfrak{se}}
\providecommand{\Sph}{\mathrm{S}}
\providecommand{\calM}{\mathcal{M}}
\providecommand{\calN}{\mathcal{N}}
\providecommand{\vecL}{\mathbb{L}}
\providecommand{\vecV}{\mathbb{V}}
\providecommand{\vecW}{\mathbb{W}}
\providecommand{\tT}{\mathrm{T}} 
\providecommand{\Id}{I} 
\DeclareMathOperator{\diag}{diag}
\DeclareMathOperator{\stab}{stab}
\DeclareMathOperator{\Ad}{Ad}
\DeclareMathOperator*{\argmin}{argmin}
\providecommand{\pr}{\mathbb{P}} 
\providecommand{\td}{\mathrm{d}}
\providecommand{\tD}{\mathrm{D}}
\providecommand{\ddt}{\frac{\td}{\td t}}
\providecommand{\mr}[1]{\mathring{#1}} 
\providecommand{\scirc}{%
    \hbox{\fontfamily{\rmdefault}\fontsize{0.4\dimexpr(\f@size pt)}{0}\selectfont{\raisebox{-0.52ex}[0ex][-0.52ex]{$\circ$}}}}
\mathchardef\mhyphen="2D
\providecommand{\dds}{\frac{\td}{\td s}}
\begin{document}

\title{Tracking control on homogeneous spaces: the Equivariant Regulator (EqR)}
\headertitle{Tracking control on homogeneous spaces: the Equivariant Regulator (EqR)}

\author{
\href{https://orcid.org/0000-0001-2345-6789}{\includegraphics[scale=0.06]{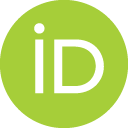}\hspace{1mm}
Matthew Hampsey}
\\
	Systems Theory and Robotics Group \\
	Australian National University \\
    ACT, 2601, Australia \\
	\texttt{matthew.hampsey@anu.edu.au} \\
	\And	\href{https://orcid.org/0000-0001-2345-6789}{\includegraphics[scale=0.06]{orcid.png}\hspace{1mm}
Pieter van Goor}
\\
	Systems Theory and Robotics Group \\
	Australian National University \\
    ACT, 2601, Australia \\
	\texttt{pieter.vangoor@anu.edu.au} \\
\And	\href{https://orcid.org/0000-0001-2345-6789}{\includegraphics[scale=0.06]{orcid.png}\hspace{1mm}
Robert Mahony}
\\
	Systems Theory and Robotics Group \\
	Australian National University \\
	ACT, 2601, Australia \\
	\texttt{robert.mahony@anu.edu.au} \\
}

\maketitle

\begin{abstract}
	Accurate tracking of planned trajectories in the presence of perturbations is an important problem in control and robotics.
	Symmetry is a fundamental mathematical feature of many dynamical systems and exploiting this property offers the potential of improved tracking performance.
    In this paper, we investigate the tracking problem for input-affine systems on homogeneous spaces: manifolds which admit symmetries with transitive group actions.
	We show that there is natural manner to lift any desired trajectory of such a system to a lifted trajectory on the symmetry group.
	This construction allows us to define a global tracking error and apply LQR design to obtain an approximately optimal control in a single coordinate chart.
	The resulting control is then applied to the original plant and shown to yield excellent tracking performance.
	We term the resulting design methodology the \emph{Equivariant Regulator} (EqR).
	We provide an example input-affine system posed on a homogeneous space, derive the trajectory linearisation in error coordinates and demonstrate the effectiveness of EqR compared to standard approaches in simulation.
	\end{abstract}

\keywords{
	Tracking, Mechatronic Systems, Application of nonlinear analysis and design, Mobile Robots, Guidance navigation and control
}


\section{Introduction}
A fundamental control task is the robust tracking of a desired state space trajectory subject to disturbances.
This problem has been extensively studied and has lead to a range of design paradigms (\cite{slotine_1983_trajectory}, \cite{Yang_1999_sliding_traj}).
A common approach for systems on $\R^m$ is to apply a Linear Quadratic Regulator (LQR) to Euclidean error dynamics linearised around the desired trajectory \cite[chapter 4]{anderson2007optimal}.

In general, the state space for a control system is a smooth manifold and error coordinates are constructed by taking the Euclidean error of the trajectories in local coordinates (for example, \cite{suicmez2014optimal}) or, in the case of an embedded manifold, by taking the error of the trajectories with respect to the Euclidean structure of the ambient space and then projecting this onto the trajectory tangent space(for example, \cite{2018_Scaramuzza_onboard_LQR}).
The former of these approaches can introduce artifacts like singularities and requires chart switching as the system state moves around the manifold.
The latter approach requires an embedding structure compatible with the geometry of the manifold, but is also ignorant of the global manifold topology and geometry and can lead to unexpected behaviour when the system state is far from the desired point.

If the system manifold admits a Lie group structure, a global, intrinsic error can be formulated in the group and used as the basis for control design.
This approach was originally used in \cite{Meyer1971} to design an almost-global controller on $\SO(3)$ for spacecraft attitude control.
This same approach has been subsequently used in multiple controller designs in robotics and aerospace applications in the intervening decades (\cite{Wie1985}, \cite{2010_Lee_Geometric}).
The error can be mapped into the Lie algebra by taking the logarithm and used for non-linear controllers (\cite{Johnson2021}) or for linear optimal controllers by mapping the error into local coordinates via the identification of the Lie algebra with $\R^m$ (\cite{2019_Farrell_Error}, \cite{2020_Forbes_Quadrotor}, \cite{Hampsey2022}).

There are, however, systems important to robotics and control for which this approach is not applicable: for example, the two-dimensional sphere $\Sph^2$ is well-known to not admit any Lie group structure (\cite{poincare1885courbes}).
Recently, development in the theory of equivariant filters has extended symmetry-based design principles to the general setting of homogeneous manifolds, where a global intrinsic error can be formed directly on the manifold (\cite{2020_mahony_EquivariantSystems}, \cite{EQF2022}).

In this paper, we propose the Equivariant Regulator (EqR): an extension of the developments in symmetry-based design on homogeneous spaces to the tracking problem.
We show that for any desired trajectory on a homogeneous space, there exists a lifted trajectory on the symmetry group.
This construction allows for the definition of a global coordinate-free tracking error for input-affine systems.
This tracking error is centered on a single point, allowing for freedom in the choice of a single coordinate chart.
We derive the dynamics of this error in coordinates and provide the linearisation in terms of the lift function.
These linearised error dynamics are used as the basis of an LQR design, yielding tracking control inputs.
The resulting control is then applied to the original plant.
To empirically demonstrate the performance of the approach, we apply the methodology to an example input-affine system on the homogeneous space $\Sph^2 \times \R^3 \times \R^3$.
We show that the EqR exhibits improved robustness with respect to a state-of-the-art LQR design in the presence of perturbations and particularly for large initialisation errors.
We claim this is due to the global error parametrisation and the manner in which the EqR handles large errors.
This paper extends the recent paper by the authors (\cite{Hampsey2022}) from free group actions, where the symmetry group is in one-to-one correspondence with the state, to symmetries on general homogeneous spaces.

\section{Preliminaries}

\subsection{Notation}
For a thorough introduction to smooth manifolds and Lie group theory, the authors recommend \cite{2012_lee_SmoothManifolds} or \cite{tu2010introduction}.

Let $\vecV$ and $\vecW$ be vector spaces. For a function $g : \vecV \to \vecW$, the notation $g[v]$ will be used to indicate that $g$ is a linear function of $\vecV$.

Let $\calM$ and $\calN$ denote smooth manifolds.
For an arbitrary point $\xi \in \calM$, the tangent space of $\calM$ at $\xi$ is denoted by $\tT_{\xi}\calM$.
For a smooth function $h : \calM \to \calN$ the notation
\begin{align*}
	\tD _{\xi|\zeta} h(\xi):\tT_{\zeta}\calM\to \tT_{h(\zeta)}\calN \\
	\delta \mapsto \tD_{\xi|\zeta}h(\xi)[\delta]
\end{align*}
denotes the differential of $h(\xi)$ evaluated at $\xi = \zeta$ in the direction $\delta \in \tT_{\zeta}\calM$.
When the basepoint and argument are implied the notation $\tD h$ will also be used for simplicity.
The space of smooth vector fields on $\calM$ is denoted with $\gothX(\calM)$.

Let $\grpG$ denote an arbitrary Lie group and denote the identity element with $\Id$.
The \emph{Lie algebra of $\grpG$} is identified with the tangent space of $\grpG$ at identity, $\gothg \simeq \tT_{\Id}\grpG$.
Given arbitrary $X \in \grpG$, \emph{left translation by $X$} is defined by
\mbox{$\mathrm L_X : \grpG \to \grpG$}, \mbox{$\mathrm L_X(Y) = XY.$}
This induces a corresponding function on $\gothg$,
\mbox{$\tD \mathrm L_X : \gothg \to \tT_X \grpG,$}
which is also referred to as left translation by $X$.
Similarly, \emph{right translation by $X$} is defined by
\mbox{$\mathrm R_X : \grpG \to \grpG$}, \mbox{$\mathrm R_X(Y) = YX.$}
This also induces a corresponding right translation on $\gothg$,
\mbox{$\tD \mathrm R_X : \gothg \to \tT_X \grpG.$}
If $\grpG \subset \GL(m, \R)$ then $\gothg \subset \R^{m \times m}$	and \mbox{$\tD \mathrm L_{X} U = XU$} and \mbox{$\tD \mathrm R_{X}U = UX$} are given by matrix multiplication.
Given a matrix Lie group $\grpG$ with Lie algebra $\gothg$, the matrix exponential $\exp$ is a local diffeomorphism between $\gothg$ and $\grpG$ when restricted to a small enough neighborhood of $0 \in \gothg$.

Given $X \in \grpG$, the adjoint map is defined by \mbox{$\Ad_X: \gothg \to \gothg$}, \mbox{$\Ad_X(U) = \tD \mathrm L_X \tD \mathrm R_{X^{-1}}U$}.
If $\grpG$ is a matrix Lie group, then \mbox{$\Ad_X(U) = XUX^{-1}$}.

Let $\grpG$ be a Lie group and $\calM$ a smooth manifold.
A left action is a smooth function $\phi: \grpG \times \calM \to \calM$ satisfying the \emph{identity} and \emph{compatibility } properties:
\begin{align*}
	\phi(\Id, \xi)        & = \xi,           &
	\phi(X, \phi(Y, \xi)) & = \phi(XY, \xi),
\end{align*}
for all $\xi \in \calM$ and $X \in \grpG$.
Given a point $X \in \grpG$, the partial map $\phi_X : \calM \to \calM$, $\phi_X(\xi) = \phi(X, \xi)$ can be formed by fixing the first argument.
Similarly, by fixing the second argument, the partial map $\phi_\xi: \grpG \to \calM$, $\phi_\xi(X) = \phi(X, \xi)$ is formed.
A group action is called \emph{transitive} if for each pair $\xi, \zeta \in \calM$, there exists a $X \in \grpG$ such that $\phi(X, \xi) = \zeta$.
If a manifold $\calM$ admits a transitive group action then it is called a \emph{homogeneous space}.
The group $\grpG$ acting on $\calM$ is called a \emph{symmetry} of $\calM$.

Note that $\phi_{X}$ is a diffeomorphism, with inverse $ \phi_{X}^{-1} = \phi_{X^{-1}}$.
The notation $\Phi_{X} f$ is used to denote the pushforward of the vector field $f$ by $\phi_X$; that is,
\begin{align*}
	\Phi_X f \coloneqq \tD\phi_X \circ f \circ \phi_{X^{-1}}.
\end{align*}

\section{Problem Statement}

In the control problem, it is typical to develop a left-invariant error (for example, \cite{2010_Lee_Geometric}).
As in \cite{Hampsey2022}, we continue in this fashion, noting that the results generalise to right-handed symmetries with the appropriate substitution of left and right actions and translations.
The following section uses results from (\cite{ARCRAS_Mahony_2022}, \cite{2020_mahony_EquivariantSystems}) that were developed for right-handed symmetries; the results for left-handed symmetries are straightforward analogies of these that will be used without proof.

\subsection{Symmetry and Kinematic Systems}

Given a smooth $m$-dimensional manifold $\calM$ and a finite-dimensional vector space $\vecL$, consider the affine system
\begin{align*}
	\dot{\xi} = f_u(\xi) \coloneqq f_0(\xi) + g(\xi) [u].
\end{align*}
Here, $f_0 \in \gothX(\calM)$, $g$ is a linear map $g : \vecL \to \gothX(\calM)$, and $u \in \vecL$ is an input signal.
A trajectory of the system is a curve $\xi: [0, t_f] \to \calM$ satisfying
\begin{align}
	\dot{\xi} & = f_u(\xi),                                  &
	\xi(0)    & = \xi_0, \label{eq:system_dynamics_manifold}
\end{align}
for some admissable signal $u : [0, t_f] \to \vecL$.

Suppose further that $\calM$ is a homogeneous space with Lie group $\grpG$ acting on $\calM$ via the smooth, transitive left action $\phi:\grpG \times \calM \to \calM$.
Then it can be shown that there exists a smooth function $\Lambda: \calM \times \vecL \to \gothg$ satisfying
\begin{align*}
	\tD\phi_{\xi}\Lambda(\xi, u) = f_u(\xi)
\end{align*}
for all $\xi \in \calM, u \in \vecL$, termed a \emph{lift} (\cite{2020_mahony_EquivariantSystems}).
If $\stab_{\phi}(\xi)$ is trivial or discrete, then $\tD\phi_\xi$ is invertible and the lift is unique.

Given an arbitrary fixed \emph{origin} $\mr\xi \in \calM$, the lift determines an associated \emph{lifted system} on $\grpG$:
\begin{align}
	\dot{X} = \tD \mathrm R_{X}\Lambda(\phi(X, \mr\xi), u), \quad\quad\quad \phi(X(0), \mr \xi) = \xi(0) \label{eq:group_dynamics}
\end{align}
where $X\in\grpG$, $u \in \vecL$.
Any trajectory $X(t)$ satisfying \eqref{eq:group_dynamics} will project down onto a trajectory $\xi(t)$ of \eqref{eq:system_dynamics_manifold} via $\xi(t) = \phi(X(t), \mr\xi)$.
In general, there is no unique preimage of a trajectory $\xi(t)$.

If the system admits a left group action $\psi: \grpG \times \vecL \to \vecL$ such that
\begin{align*}
	\Phi_{X}f_u = f_{\psi_X(u)},
\end{align*}
for all $X \in \grpG, u \in \vecL$ then the system is said to be \emph{equivariant}.
Similarly, if a lift $\Lambda$ satisfies
\begin{align*}
	\Ad_X \Lambda(\xi, u) = \Lambda(\phi(X, \xi), \psi(X, u)),
\end{align*}
then the lift $\Lambda$ is said to be equivariant.
\subsection{Trajectory Tracking}
Given a kinematic system \eqref{eq:system_dynamics_manifold}, a desired trajectory $\xi_d(t): [a, b] \to \calM$ is a curve satisfying $\dot{\xi}_d = f_{u_d}(\xi_d)$, \mbox{$\xi_d(0) = \xi^0_{d}$}, where $u_d(t)$ is a known input signal.
The tracking task is to choose a suitable $u(t)$ that forces the system state $\xi(t)$ to $\xi_d(t)$.
In an optimal control framework, $u(t)$ is chosen so as to optimise a cost functional on the set of feasible trajectories.
Let $\Upsilon(\xi_0, [0, t_f])$ denote the set of all feasible trajectories on $\calM$ that start at the initial value $\xi_0$.

A cost functional is then a mapping $J_{\xi_d, u_d} : \Upsilon \to \R $.
The trajectory pair $(\xi^*, u^*)$ is chosen so as to minimise $J_{\xi_d, u_d}$;
\begin{align*}
	(\xi^*, u^*) = \argmin_{(\xi, u) \in \Upsilon} J_{\xi_d, u_d}(\xi, u).
\end{align*}

\section{Equivariant Regulator (E\lowercase{q}R)}

Let $\calM$ be a homogeneous space with symmetry group $\grpG$, system function $f$ and lift $\Lambda$.
Let $\xi_d$ be a desired trajectory to be tracked, and let $\xi$ be the actual system trajectory.
In the EqR framework, we choose a set of intrinsic error coordinates $\xi_e$ on the manifold and study its dynamics.
\subsection{Error dynamics on $\calM$}
In general, there is no intrinsic error between $\xi$ and $\xi_d$ defined for an arbitrary manifold $\calM$.
If $\calM$ is a homogeneous space, however, then an intrinsic error can be formed via the action of the symmetry group $\grpG$ on $\calM$.
To form this error, the desired trajectory $\xi_d$ must be lifted to a corresponding trajectory $X_d$ on the group $\grpG$.
Choose an arbitrary point $\mr \xi \in \calM$, and choose $X^0_d$ so that $\phi(X^0_d, \mr \xi) = \xi_d$.
Then, the solution of $\dot{X}_d = \tD \mathrm R_{X_d} \Lambda(\phi(X_d, \mr \xi), u_d)$ with initial condition $X_d(0) = X^0_d$ is a trajectory on $\grpG$ that satisfies $\phi(X_d, \mr \xi) = \xi_d$.

The point of this construction is that it allows for the definition of an intrinsic error:
\begin{align}
\xi_e(t) \coloneqq \phi(X_d^{-1}, \xi) \label{eq:error_def}.
\end{align}.

\begin{proposition}
Let $\xi_d$ be a desired trajectory, $\mr \xi$ the origin and $X_d$ be a corresponding lifted trajectory.
Let $\xi$ be the actual system trajectory and let $\xi_e$ be the associated error trajectory, defined as in \eqref{eq:error_def}.
Then $\xi_e = \mr\xi$ if and only if $\xi = \xi_d$.
\end{proposition}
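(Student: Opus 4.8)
The plan is to establish both implications directly from the defining axioms of a left action---\emph{identity} and \emph{compatibility}---together with the recorded fact that each partial map $\phi_X$ is a diffeomorphism with inverse $\phi_{X^{-1}}$. The argument is entirely pointwise in $t$, so I would fix an arbitrary instant and treat $\xi$, $\xi_e$, $\xi_d$, $X_d$ as their values at that time, the defining relation $\phi(X_d, \mr\xi) = \xi_d$ holding throughout.

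For the forward direction I would assume $\xi = \xi_d$ and substitute into the definition \eqref{eq:error_def}. Rewriting $\xi_d$ via $\phi(X_d, \mr\xi) = \xi_d$ gives $\xi_e = \phi(X_d^{-1}, \phi(X_d, \mr\xi))$. Collapsing the nested action by the compatibility axiom yields $\phi(X_d^{-1} X_d, \mr\xi) = \phi(\Id, \mr\xi)$, and the identity axiom then delivers $\xi_e = \mr\xi$.

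For the reverse direction I would assume $\xi_e = \mr\xi$, that is $\phi(X_d^{-1}, \xi) = \mr\xi$, and apply the diffeomorphism $\phi_{X_d}$ to both sides. Since $\phi_{X_d}^{-1} = \phi_{X_d^{-1}}$, the left-hand side collapses by compatibility and identity to $\xi$, while the right-hand side equals $\phi(X_d, \mr\xi) = \xi_d$; hence $\xi = \xi_d$. Equivalently, one may simply invoke injectivity of $\phi_{X_d^{-1}}$, which forbids two distinct states from mapping to the single point $\mr\xi$.

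There is no genuine obstacle here: the statement is a direct consequence of the group-action axioms and the invertibility of $\phi_{X_d}$. The only point meriting (minor) care is the reverse implication, where it is precisely the injectivity of the partial map---guaranteed because $\phi_{X_d}$ is a diffeomorphism with inverse $\phi_{X_d^{-1}}$, as noted in the preliminaries---that prevents the error from vanishing at any state other than $\xi_d$.
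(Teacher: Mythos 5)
Your proposal is correct and follows essentially the same route as the paper: both directions are proved by rewriting $\xi_d = \phi(X_d, \mr\xi)$ and collapsing the nested action via the compatibility and identity axioms, with your ``apply the diffeomorphism $\phi_{X_d}$ to both sides'' being exactly the paper's ``left-multiplying both sides by $X_d$''. Your closing remark on injectivity of $\phi_{X_d^{-1}}$ is a harmless additional observation, not a departure from the paper's argument.
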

\begin{proof}
First, assume that $\xi_e = \mr \xi$.
Then \mbox{$\mr \xi = \xi_e = \phi(X_d^{-1}, \xi)$}.
Left-multiplying both sides by $X_d$,
\begin{align*}
	\phi(X_d, \mr \xi) &=  \phi(X_d, \phi(X_d^{-1}, \xi))
	=  \phi(X_dX_d^{-1}, \xi)
	= \xi.
\end{align*}
The left-hand side is just $\xi_d$, giving the required result.
For the converse argument, assume that $\xi = \xi_d$. Then
\begin{align*}
	\xi_e = \phi(X_d^{-1}, \xi_d)
	= \phi(X_d^{-1}, \phi(X_d, \mr \xi))
	= \phi(X_d^{-1}X_d, \mr \xi) = \mr \xi,
\end{align*}
as required.
\end{proof}
Thus, regulating $\xi_e$ at $\mr\xi$ is equivalent to the task of driving $\xi \to \xi_d$.
For the input error coordinates, define $\tilde{u} = u - u_d$.

\begin{proposition}
	\label{prop:E_dynamics}
	Let $\xi_e$ be an error trajectory defined by \eqref{eq:error_def}.
	Then the time derivative of $\xi_e$ is given by
	\begin{align}
		\dot{\xi_e} & = \tD \phi_{\xi_e}\Ad_{X_d^{-1}}[\Lambda(\phi(X_d, \xi_e), u_d)-\Lambda(\xi_d, u_d)] \notag \\
		            & \phantom{=} + \Phi_{X_d^{-1}} g(\xi_e)[\tilde{u}], \label{eq:xi_dynamics}
	\end{align}
\end{proposition}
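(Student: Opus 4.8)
The plan is to differentiate the defining relation \eqref{eq:error_def}, $\xi_e = \phi(X_d^{-1}, \xi)$, directly by the chain rule, treating $\phi$ as a function of its two arguments $X_d^{-1}(t)$ and $\xi(t)$, and then to convert each resulting term into the stated form using the compatibility property of the action together with the translation identities relating $\mathrm L$, $\mathrm R$ and $\Ad$. Writing $\dot{\xi}_e$ as the sum of the partial differential of $\phi$ in the group argument (the partial map $\phi_\xi$) applied to $\tfrac{\td}{\dt}X_d^{-1}$, plus the partial differential in the manifold argument (the partial map $\phi_{X_d^{-1}}$) applied to $\dot\xi$, splits the computation into two manageable pieces.

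First I would dispose of the manifold-argument term. Since $u = u_d + \tilde u$, I split $\dot\xi = f_{u_d}(\xi) + g(\xi)[\tilde u]$. The $\tilde u$ contribution is $\tD\phi_{X_d^{-1}}[g(\xi)[\tilde u]]$ evaluated at $\xi$; unfolding the pushforward definition $\Phi_X f \coloneqq \tD\phi_X \circ f \circ \phi_{X^{-1}}$ and using $\phi(X_d, \xi_e) = \xi$ shows this equals exactly $\Phi_{X_d^{-1}} g(\xi_e)[\tilde u]$, the second term of \eqref{eq:xi_dynamics}, so no further work is needed there. For the group-argument term I would first compute $\tfrac{\td}{\dt}X_d^{-1}$ from the lifted dynamics: differentiating $X_d X_d^{-1} = \Id$ and substituting $\dot X_d = \tD\mathrm R_{X_d}\Lambda(\xi_d, u_d)$ (recall $\phi(X_d,\mr\xi)=\xi_d$) yields $\tfrac{\td}{\dt}X_d^{-1} = -\tD\mathrm L_{X_d^{-1}}\Lambda(\xi_d, u_d)$.

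The heart of the argument, and the step I expect to be most delicate, is rewriting the two remaining $u_d$ contributions in the $\Ad_{X_d^{-1}}$ form. The key observation is a pair of factorisations coming from the compatibility property: since $\xi = \phi(X_d, \xi_e)$, the maps $\grpG\to\calM$ satisfy $\phi_\xi = \phi_{\xi_e}\circ \mathrm R_{X_d}$, and likewise $\phi_{X_d^{-1}}\circ\phi_\xi = \phi_\xi\circ\mathrm L_{X_d^{-1}}$, as both send $Y\mapsto \phi(X_d^{-1}Y, \xi)$. Differentiating these at the appropriate base points ($X_d^{-1}$ and $\Id$ respectively) and combining with the identity $\tD\mathrm R_{X_d}\,\tD\mathrm L_{X_d^{-1}} = \Ad_{X_d^{-1}}$ on $\gothg$ — which follows from the definition of $\Ad$ and the commutativity of left and right translations — converts the contribution of $\tfrac{\td}{\dt}X_d^{-1}$ into $-\tD\phi_{\xi_e}\Ad_{X_d^{-1}}\Lambda(\xi_d, u_d)$ and, after invoking the lift property $f_{u_d}(\xi) = \tD\phi_\xi\Lambda(\xi,u_d)$, converts the $f_{u_d}$ part of the manifold-argument term into $\tD\phi_{\xi_e}\Ad_{X_d^{-1}}\Lambda(\xi, u_d)$. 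The bookkeeping of base points for each $\tD\phi$, $\tD\mathrm L$ and $\tD\mathrm R$ is where errors are most likely to creep in, so I would carry each differential together with the point at which it is evaluated throughout.

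Finally I would add the three pieces. Using $\phi(X_d, \xi_e) = \xi$ to rewrite $\Lambda(\xi, u_d) = \Lambda(\phi(X_d,\xi_e), u_d)$, the two $u_d$ contributions combine into $\tD\phi_{\xi_e}\Ad_{X_d^{-1}}[\Lambda(\phi(X_d,\xi_e), u_d) - \Lambda(\xi_d, u_d)]$, the first term of \eqref{eq:xi_dynamics}; together with the pushforward term from the second step this is precisely \eqref{eq:xi_dynamics}, completing the proof.
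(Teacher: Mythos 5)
Your proposal is correct and takes essentially the same route as the paper's proof: differentiate \eqref{eq:error_def} by the chain rule, split $\dot\xi = f_{u_d}(\xi) + g(\xi)[\tilde{u}]$, identify the $\tilde{u}$ term with the pushforward $\Phi_{X_d^{-1}} g(\xi_e)[\tilde{u}]$, compute $\frac{\td}{\td t}X_d^{-1} = -\tD \mathrm{L}_{X_d^{-1}}\Lambda(\xi_d, u_d)$, and convert both $u_d$ contributions into the $\Ad_{X_d^{-1}}$ form via compatibility of the action. Your map-level factorisations $\phi_\xi = \phi_{\xi_e}\circ \mathrm{R}_{X_d}$ and $\phi_{X_d^{-1}}\circ\phi_\xi = \phi_\xi\circ \mathrm{L}_{X_d^{-1}}$ are exactly the identities the paper verifies by differentiating the curves $s\mapsto\phi(X_d^{-1}e^{sh},\xi)$ and $s\mapsto\phi(X_d^{-1},\phi(e^{sh},\xi))$, so the difference is purely notational.
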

where $\Phi_X g$ is the pushforward of $g$ by $\phi_X$ with $\tilde{u}$ held constant.
\begin{proof}
	By the product rule,
	\begin{align}
		\dot{\xi_e} & = \tD_{X|X_d^{-1}} \phi(X, \xi)\ddt (X_d^{-1}) + \tD_{\zeta | \xi} \phi(X_d^{-1}, \zeta)\ddt \xi, \notag                                                    \\
		            & = \tD_{X|X_d^{-1}} \phi(X, \xi)(- \tD L_{X_d^{-1}}\Lambda(\xi_d, u_d)) \notag                                                                               \\
		            & \phantom{=} + \tD_{\zeta | \xi} \phi(X_d^{-1}, \zeta) f_u(\xi), \notag                                                                                      \\
		            & = - \tD_{X|X_d^{-1}} \phi(X, \xi)\tD L_{X_d^{-1}}\Lambda(\xi_d, u_d) \notag                                                                                 \\
		            & \phantom{=} +  \tD_{\zeta | \xi} \phi(X_d^{-1}, \zeta) f_{u_d}(\xi) +  \tD_{\zeta | \xi} \phi(X_d^{-1}, \zeta) g(\xi)[\tilde{u}], \label{eq:error_dyn_mid1} \\
		            & = -\tD_{X|I} \phi(X, \xi_e) \Ad_{X_d^{-1}} \Lambda(\xi_d, u_d) \label{eq:error_dyn_mid2}                                                                    \\
		            & \phantom{=} + \tD_{\zeta | \xi} \phi(X_d^{-1}, \zeta) \tD_{X|I} \phi(X, \xi) \Lambda(\xi, u_d) \label{eq:error_dyn_mid3}                                    \\
		            & \phantom{=} + \tD_\xi \phi(X_d^{-1}, \xi)g(\xi)[\tilde{u}], \notag                                                                                          \\
		            & = \tD_{X|I} \phi(X, \xi_e) \Ad_{X_d^{-1}} (\Lambda(\xi, u_d) - \Lambda(\xi_d, u_d)) \label{eq:error_dyn_mid4}                                               \\
		            & \phantom{=} + \Phi_{X_d^{-1}}g(\xi_e)[\tilde{u}] \label{eq:error_dyn_mid5},
	\end{align}
	where \eqref{eq:error_dyn_mid1} follows from the equality
	\begin{align*}
		f_u(\xi) = f_{\tilde{u} + u_d}(\xi) = f_{u_d}(\xi) + g(\xi)[\tilde{u}],
	\end{align*}
	\eqref{eq:error_dyn_mid2} follows from the relationships
	\begin{align*}
		\tD_{X|X_d^{-1}} \phi(X, \xi)\tD L_{X_d^{-1}}[h] & = \dds \big|_{0}\phi(X_d^{-1}e^{sh}, \xi)              \\
		                                                 & = \dds \big|_{0}\phi(X_d^{-1}e^{sh}, \phi(X_d, \xi_e)) \\
		                                                 & = \dds \big|_{0}\phi(X_d^{-1}e^{sh}X_d, \xi_e)         \\
		                                                 & = D_{X|I}\phi(X, \xi_e)\Ad_{X_d^{-1}}[h],
	\end{align*}
	\eqref{eq:error_dyn_mid3} follows from the definition of the lift
	\begin{align*}
		f_{u_d}(\xi) = \tD_{X|I} \phi(X, \xi) \Lambda(\xi, u_d),
	\end{align*}
	where \eqref{eq:error_dyn_mid4} follows from the relationships
	\begin{align*}
		\tD_{\xi} \phi(X_d^{-1}, \xi)\tD\phi_{\xi}[h] & = \dds \big|_{0}\phi(X_d^{-1}, \phi(e^{sh}, \xi)) \\
		                                              & = \dds \big|_{0}\phi(X_d^{-1}e^{sh}X_d, \xi_e)    \\
		                                              & = D_{X|I}\phi(X, \xi_e)\Ad_{X_d^{-1}}[h],
	\end{align*}
	and \eqref{eq:error_dyn_mid5} follows from the definition of $\Phi_{X_d^{-1}}$.
\end{proof}

\begin{corollary}
	\label{prop:E_dynamics_corollary}
	If the system function $f$ is equivariant, then the time derivative of the error state $\xi_e$ is given by
	\begin{align*}
		\dot{\xi_e} = \tD \phi_{\xi_e}[\Lambda(\xi_e, \mr u_d)-\Lambda(\mr \xi, \mr {u}_d)] + g(\xi_e)[\mr{\tilde{u}}],
	\end{align*}
	where $\mr{\tilde{u}} = \psi(X_d^{-1}, \tilde{u}), \mr {u}_d = \psi(X_d^{-1}, u_d).$
\end{corollary}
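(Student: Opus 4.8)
The plan is to take the general error dynamics of Proposition~\ref{prop:E_dynamics} as the starting point and simplify each of its two summands separately, using the two distinct equivariance hypotheses now available: the equivariance of the lift $\Lambda$ and the equivariance of the system function $f$. Recall that equivariance of $f$ under the input action $\psi$ forces the lift to satisfy $\Ad_X \Lambda(\xi, u) = \Lambda(\phi(X, \xi), \psi(X, u))$, and that by construction $\xi_d = \phi(X_d, \mr\xi)$. The whole computation is then a matter of pushing the factor $\Ad_{X_d^{-1}}$ through the lift evaluations and pushing the factor $\Phi_{X_d^{-1}}$ through $g$.

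For the drift term I would apply the lift-equivariance relation with $X = X_d^{-1}$ to each lift evaluation. On the first summand this gives
\begin{align*}
\Ad_{X_d^{-1}}\Lambda(\phi(X_d, \xi_e), u_d) = \Lambda\big(\phi(X_d^{-1}, \phi(X_d, \xi_e)),\, \psi(X_d^{-1}, u_d)\big) = \Lambda(\xi_e, \mr u_d),
\end{align*}
using the compatibility property $\phi(X_d^{-1}, \phi(X_d, \xi_e)) = \phi(\Id, \xi_e) = \xi_e$ together with the definition $\mr u_d = \psi(X_d^{-1}, u_d)$. Rewriting $\xi_d$ as $\phi(X_d, \mr\xi)$ and applying the same manoeuvre to the second summand yields $\Ad_{X_d^{-1}}\Lambda(\xi_d, u_d) = \Lambda(\mr\xi, \mr u_d)$. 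Combining these collapses the drift term to $\tD\phi_{\xi_e}[\Lambda(\xi_e, \mr u_d) - \Lambda(\mr\xi, \mr u_d)]$, which is precisely the claimed first term.

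For the input term I would first extract the transformation rule for $g$ from the equivariance of $f$. Substituting the affine decomposition $f_u = f_0 + g[\cdot]$ into $\Phi_X f_u = f_{\psi_X(u)}$ gives $\Phi_X f_0 + \Phi_X(g[u]) = f_0 + g[\psi_X(u)]$, an identity that must hold for every $u \in \vecL$. Since $\Phi_X(g[u])$ is linear in $u$ and (using linearity of $\psi_X$) so is $g[\psi_X(u)]$, I would separate the input-independent and input-linear contributions to conclude both $\Phi_X f_0 = f_0$ and, crucially, $\Phi_X g[\cdot] = g[\psi_X(\cdot)]$ as vector fields. Specialising to $X = X_d^{-1}$ and evaluating at $\xi_e$ then converts $\Phi_{X_d^{-1}} g(\xi_e)[\tilde u]$ into $g(\xi_e)[\psi(X_d^{-1}, \tilde u)] = g(\xi_e)[\mr{\tilde u}]$, supplying the second term.

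I expect the input term, rather than the drift term, to be the main obstacle: extracting $\Phi_X g[\cdot] = g[\psi_X(\cdot)]$ cleanly from $\Phi_X f_u = f_{\psi_X(u)}$ requires the argument that the input-independent parts separate to give invariance of the drift $f_0$, and that $\psi_X$ acts linearly so the linear-in-$u$ parts may be matched term by term. Once this transformation rule is secured, everything else reduces to routine bookkeeping with the group-action axioms and the lift-equivariance relation established earlier.
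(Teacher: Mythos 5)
Your proposal is correct and follows essentially the same route as the paper's proof: apply the lift-equivariance identity with $X = X_d^{-1}$ to both lift evaluations in the drift term of Proposition~\ref{prop:E_dynamics} (using $\xi_d = \phi(X_d, \mr\xi)$ for the second), and convert the input term via $\Phi_{X_d^{-1}} g(\xi_e)[\tilde{u}] = g(\xi_e)[\mr{\tilde{u}}]$. The only difference is one of detail: the paper asserts the input-term identity in a single line, whereas you derive it from the affine decomposition of $f_u$, correctly flagging the implicit linearity of $\psi_X$ that the statement in fact requires.
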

\begin{proof}
	By equivariance,
	\begin{align*}
		\Ad_{X_d^{-1}}[\Lambda( & \phi(X_d, \xi_e), u_d)]                                           \\
		                        & = \Lambda(\phi(X_d^{-1}, \phi(X_d, \xi_e)), \psi(X_d^{-1}, u_d)), \\
		                        & = \Lambda(\xi_e, \mr u_d),
	\end{align*}
	and
	\begin{align*}
		\Phi_{X_d^{-1}} g(\xi_e)[\tilde{u}] = g(\xi_e)[\mr{\tilde{u}}].
	\end{align*}
	Applying these identities to Proposition \ref{prop:E_dynamics} gives the required result.
\end{proof}

\subsection{Error dynamics in local coordinates}
\label{sec:local_eps_dynamics}
In the standard application of LQR on a non-linear manifold, local charts have to be chosen along the desired trajectory $\xi_d$.
This requires the problem to be solved in a series of different coordinates, leading to discontinuities and numerical conditioning issues when changing between local charts.
In contrast, by centering a coordinate chart on $\mr\xi$, the EqR error dynamics can be expressed in a single chart, independent of the system trajectory.
Moreover, this chart can be chosen to be well conditioned numerically at the point $\mr{\xi}$ where the asymptotic performance is most important.
Let $\chi$ be a coordinate chart centred on $\mr\xi$ (i.e. $\chi(\mr\xi) = 0$).
Define the error coordinates
\begin{align}
\varepsilon \coloneqq \chi(\xi_e) \label{eq:vareps_def}.
\end{align}

\begin{proposition}
	\label{prop:eps_dynamics}
	Given an origin $\mr \xi$ as well as desired and current trajectories, let $\xi_e$ be an error trajectory defined by \eqref{eq:error_def}.
	Let $\chi$ be a coordinate chart centered on $\mr \xi$ and let $\varepsilon$ be the local coordinate representation of $\xi_e$; $\varepsilon = \chi(\xi_e)$.
	Then the first order dynamics of $\varepsilon$ about $\varepsilon = 0, \tilde{u} = 0$ are
	\begin{align}\label{eq:linearisation}
		\dot{\varepsilon} \approx A(t)\varepsilon + B(t)\tilde{u},
	\end{align}
	where
	\begin{align*}
		A(t) & = \tD \chi \tD\phi_{\mr \xi} \Ad_{X_d^{-1}}\tD_{\xi| \xi_d} \Lambda(\xi, u_d)\tD_{\xi|\mr\xi} \phi_{X_d}(\xi)\tD_{\varepsilon | 0} \chi^{-1}(\varepsilon),
		\\
		B(t) & = \Phi_{X_d^{-1}} g(\mr \xi).
	\end{align*}
\end{proposition}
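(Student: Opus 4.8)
The plan is to start from the exact error dynamics of Proposition \ref{prop:E_dynamics} and produce the first-order Taylor expansion of $\dot{\varepsilon}$ about the equilibrium $(\varepsilon, \tilde{u}) = (0,0)$. Since $\varepsilon = \chi(\xi_e)$, differentiating in time gives $\dot{\varepsilon} = \tD\chi\,[\dot{\xi_e}]$, so I would substitute the right-hand side of \eqref{eq:xi_dynamics} and then differentiate the resulting expression separately with respect to $\varepsilon$ (to obtain $A(t)$) and with respect to $\tilde{u}$ (to obtain $B(t)$), evaluating both at $\varepsilon = 0$, $\tilde{u} = 0$.

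First I would record the key simplifying observation: at $\varepsilon = 0$ we have $\xi_e = \chi^{-1}(0) = \mr\xi$, and hence $\phi(X_d, \xi_e) = \phi(X_d, \mr\xi) = \xi_d$. Consequently the bracketed term $\Lambda(\phi(X_d, \xi_e), u_d) - \Lambda(\xi_d, u_d)$ vanishes at the equilibrium. This confirms that $(\varepsilon,\tilde{u})=(0,0)$ is indeed an equilibrium of the drift and, crucially, it controls the linearisation: when differentiating the product $\tD\phi_{\xi_e}\Ad_{X_d^{-1}}[\,\cdots\,]$ by the product rule, the term in which the derivative falls on $\tD\phi_{\xi_e}$ is multiplied by the bracket evaluated at $\varepsilon=0$, which is zero. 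Only the term in which the derivative falls on the bracket survives, so the first-order drift coefficient reduces to $\tD\chi\,\tD\phi_{\mr\xi}\Ad_{X_d^{-1}}$ applied to the derivative of $\Lambda(\phi(X_d, \xi_e), u_d)$ with respect to $\varepsilon$.

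Next I would evaluate that remaining derivative by the chain rule through the composition $\varepsilon \mapsto \chi^{-1}(\varepsilon) \mapsto \phi_{X_d}(\chi^{-1}(\varepsilon)) \mapsto \Lambda(\cdot, u_d)$, tracking base points carefully: $\tD_{\varepsilon|0}\chi^{-1}$ maps into $\tT_{\mr\xi}\calM$, then $\tD_{\xi|\mr\xi}\phi_{X_d}(\xi)$ maps into $\tT_{\xi_d}\calM$ since $\phi_{X_d}(\mr\xi) = \xi_d$, and finally $\tD_{\xi|\xi_d}\Lambda(\xi, u_d)$ maps into $\gothg$. Composing these with the surviving factors above reproduces exactly the stated $A(t)$. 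For $B(t)$, the input term $\Phi_{X_d^{-1}} g(\xi_e)[\tilde{u}]$ is already linear in $\tilde{u}$, so its partial derivative with respect to $\tilde{u}$ at the equilibrium is simply the coefficient evaluated at $\xi_e = \mr\xi$, namely $\Phi_{X_d^{-1}} g(\mr\xi)$ (mapped through $\tD\chi$ at $\mr\xi$), yielding $B(t)$.

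The bookkeeping of base points in the chain-rule computation is where I expect the main difficulty to lie: each differential must be evaluated at the correct footpoint ($0$, $\mr\xi$, or $\xi_d$), and the geometric identities already established in the proof of Proposition \ref{prop:E_dynamics} — in particular that conjugation by the group action produces the factor $\Ad_{X_d^{-1}}$ — must be applied consistently. Once the vanishing-bracket observation is in place, however, the remaining work is a routine, if careful, application of the chain and product rules.
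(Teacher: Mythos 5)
Your proposal is correct and follows essentially the same route as the paper's proof: expand the exact error dynamics of Proposition \ref{prop:E_dynamics} in the chart, observe that the bracketed term $\Lambda(\phi(X_d,\chi^{-1}(0)),u_d)-\Lambda(\xi_d,u_d)$ vanishes at $\varepsilon=0$ so only the product-rule term differentiating the bracket survives, apply the chain rule through $\chi^{-1}$, $\phi_{X_d}$, and $\Lambda$ to obtain $A(t)$, and read off $B(t)$ from linearity in $\tilde{u}$. Your explicit tracking of footpoints ($0 \mapsto \mr\xi \mapsto \xi_d$), and your parenthetical note that $B(t)$ implicitly carries the chart differential, are if anything slightly more careful than the paper's presentation, but the argument is the same.
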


\begin{proof}
	From Proposition \ref{prop:E_dynamics}, it follows that the full dynamics of $\varepsilon$ are given by
	\begin{align}
		 \dot \varepsilon & = \tD\chi \tD_{X|I} \phi(X, \chi^{-1}(\varepsilon))\Ad_{X_d^{-1}}[-\Lambda(\xi_d, u_d) \notag \\
		    & \quad +\Lambda(\phi(X_d, \chi^{-1}(\varepsilon)), u_d)] \notag                                \\
		    & \quad + \Phi_{X_d^{-1}} g(\chi^{-1}(\varepsilon))[\tilde{u}]. \label{eq:eps_lin_dynamics}
	\end{align}
The matrix $A(t)$ is the linearisation of the $\dot{\varepsilon}$ dynamics \eqref{eq:eps_lin_dynamics} with respect to the state variable $\varepsilon$ evaluated at $(\varepsilon,\tilde{u}) = (0,0)$.
One has
\begin{align}
A(t) & =  \left. \dds \right\vert_{0} \tD\chi \tD_{X|I} \phi(X, \chi^{-1}(s \varepsilon))\Ad_{X_d^{-1}}[-\Lambda(\xi_d, u_d)
	\notag                                                                                                                                                               \\ &\hspace{2cm}
	+\Lambda(\phi(X_d, \chi^{-1}(s \varepsilon)), u_d)]  \notag                                                \\
	 & =  \left. \dds \right\vert_{0}  \tD\chi \tD_{X|I} \phi(X, \chi^{-1}(s \varepsilon))\Ad_{X_d^{-1}}[-\Lambda(\xi_d, u_d)
	\notag                                                                                                                                                               \\ &\quad
	+\Lambda(\phi(X_d, \chi^{-1}(0)), u_d)]
	\notag                                                                                                                                                               \\
	&\quad + \left. \dds \right\vert_{0} \tD\chi \tD_{X|I} \phi(X, \chi^{-1}(0))\Ad_{X_d^{-1}}[-\Lambda(\xi_d, u_d)
	\notag                                                                                                                                                               \\ &\quad
	+\Lambda(\phi(X_d, \chi^{-1}(s \varepsilon)), u_d)]\label{eq:eps_dyn_mid1}                                                                                          \\
	 & = \left. \dds \right\vert_{0}  \tD\chi \tD_{X|I} \phi(X, \chi^{-1}(0))\Ad_{X_d^{-1}}[-\Lambda(\xi_d, u_d)
	\notag                                                                                                                                                               \\ &\hspace{2cm}
	+\Lambda(\phi(X_d, \chi^{-1}(s \varepsilon)), u_d)]\label{eq:eps_dyn_mid2}                                                                                          \\
	 & = \tD \chi \tD\phi_{\mr \xi} \Ad_{X_d^{-1}}\tD_{\xi| \xi_d} \Lambda(\xi, u_d)\tD_{\xi|\mr\xi} \phi_{X_d}(\xi)\tD_{\varepsilon | 0} \chi^{-1}(\varepsilon)
\notag,
\end{align}
	where
	\eqref{eq:eps_dyn_mid1} follows from the product rule and \eqref{eq:eps_dyn_mid2} follows from the identity $\phi(X_d, \chi^{-1}(0)) = \phi(X_d, \mr\xi) = \xi_d$,
	so
	\begin{align*}
		-\Lambda(\xi_d, u_d) +\Lambda(\phi(X_d, \chi^{-1}(0)), u_d) = 0.
	\end{align*}
Note that only the second term in $\dot{\varepsilon}$ \eqref{eq:eps_lin_dynamics} depends on $\tilde{u}$.
Moreover, this dependence is linear.
It follows that the matrix $B(t)$ is
	\begin{align*}
		B(t) = \Phi_{X_d^{-1}} g(\chi^{-1}(\varepsilon)) |_{\varepsilon = 0} = \Phi_{X_d^{-1}} g(\mr \xi).
	\end{align*}
\end{proof}

\begin{corollary}
	\label{prop:eps_dynamics_equiv}
	If $f$ is equivariant, then the $A(t)$ and $B(t)$ matrices in Proposition \ref{prop:eps_dynamics} simplify to
	\begin{subequations}\label{eq:equivariant_linearisation}
		\begin{align*}
			A(t) & = \tD \chi \tD\phi_{\mr \xi} \tD_{\xi| \mr \xi} \Lambda(\xi, \mr{u}_d)\tD_{\varepsilon | 0} \chi^{-1}(\varepsilon),
			\\
			B(t) & = g(\mr \xi) \tD \psi_{X_d^{-1}}.
		\end{align*}
	\end{subequations}
\end{corollary}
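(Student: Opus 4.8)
The plan is to start directly from the expressions for $A(t)$ and $B(t)$ given in Proposition \ref{prop:eps_dynamics} and simplify them using the equivariance of the lift, $\Ad_X \Lambda(\xi, u) = \Lambda(\phi(X, \xi), \psi(X, u))$, together with the abbreviation $\mr u_d = \psi(X_d^{-1}, u_d)$ from Corollary \ref{prop:E_dynamics_corollary}. The whole task reduces to two local identities of differentials, one for each matrix; the outer factors $\tD\chi\,\tD\phi_{\mr\xi}$ and $\tD_{\varepsilon|0}\chi^{-1}(\varepsilon)$ are untouched.

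For $A(t)$, the key observation is that the three inner factors $\Ad_{X_d^{-1}}\,\tD_{\xi|\xi_d}\Lambda(\xi, u_d)\,\tD_{\xi|\mr\xi}\phi_{X_d}(\xi)$ together constitute the differential at $\xi = \mr\xi$ of the composite map $\xi \mapsto \Ad_{X_d^{-1}}\Lambda(\phi(X_d, \xi), u_d)$. Indeed, at a fixed time $t$ the operator $\Ad_{X_d^{-1}}$ is a constant linear map on $\gothg$ and hence commutes with $\tD_{\xi}$, while $\tD_{\xi|\xi_d}\Lambda(\xi, u_d)\,\tD_{\xi|\mr\xi}\phi_{X_d}(\xi)$ is precisely the chain rule for $\Lambda(\cdot, u_d) \circ \phi_{X_d}$ evaluated at $\mr\xi$, using that $\phi(X_d, \mr\xi) = \xi_d$ supplies the correct base point. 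I would then apply equivariance with $X = X_d^{-1}$ and substitute $\zeta = \phi(X_d, \xi)$ to rewrite this composite map in closed form,
\begin{align*}
	\Ad_{X_d^{-1}}\Lambda(\phi(X_d, \xi), u_d) = \Lambda(\phi(X_d^{-1}, \phi(X_d, \xi)), \psi(X_d^{-1}, u_d)) = \Lambda(\xi, \mr u_d),
\end{align*}
where the final equality uses the compatibility property $\phi(X_d^{-1}, \phi(X_d, \xi)) = \xi$ and $\mr u_d = \psi(X_d^{-1}, u_d)$. Differentiating at $\xi = \mr\xi$ collapses the three inner factors to the single factor $\tD_{\xi|\mr\xi}\Lambda(\xi, \mr u_d)$, which upon reinstating the outer factors yields the stated $A(t)$.

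For $B(t)$, I would invoke the identity $\Phi_{X_d^{-1}}g(\xi_e)[\tilde u] = g(\xi_e)[\psi(X_d^{-1}, \tilde u)]$ already established in the proof of Corollary \ref{prop:E_dynamics_corollary}. Evaluating at $\xi_e = \mr\xi$ gives the input term $g(\mr\xi)[\psi(X_d^{-1}, \tilde u)]$; since $B(t)$ is obtained as the coefficient of $\tilde u$ in the first-order dynamics, and $g(\mr\xi)[\cdot]$ is linear, reading off this coefficient (equivalently, differentiating with respect to $\tilde u$ at the origin of $\vecL$) produces $B(t) = g(\mr\xi)\,\tD\psi_{X_d^{-1}}$.

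I expect the main obstacle to be the differential bookkeeping in the $A(t)$ step: one must be careful that $\Ad_{X_d^{-1}}$ is treated as a state-independent (though time-varying) linear map so that it passes through $\tD_{\xi}$, and that the base points align correctly under the chain rule, since the inner map $\phi_{X_d}$ carries $\mr\xi$ to $\xi_d$, which is exactly where $\tD_{\xi|\xi_d}\Lambda(\xi, u_d)$ is anchored. Once the three factors are recognised as a single composite differential, the application of lift equivariance is routine. An essentially equivalent alternative would be to instead linearise the already-simplified error dynamics of Corollary \ref{prop:E_dynamics_corollary} directly, repeating verbatim the linearisation argument of Proposition \ref{prop:eps_dynamics}; this bypasses manipulating the unsimplified $A(t)$ at the cost of re-deriving the vanishing of the drift bracket at $\varepsilon = 0$.
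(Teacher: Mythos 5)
Your proof is correct and is exactly the intended derivation: the paper states this corollary without an explicit proof, and your collapse of the inner factors of $A(t)$ in Proposition \ref{prop:eps_dynamics} via the chain rule together with the lift-equivariance identity $\Ad_{X_d^{-1}}\Lambda(\phi(X_d,\xi),u_d)=\Lambda(\xi,\mr u_d)$, and your treatment of $B(t)$ via $\Phi_{X_d^{-1}}g(\xi_e)[\tilde u]=g(\xi_e)[\psi(X_d^{-1},\tilde u)]$, mirror precisely the manipulations the paper itself performs in the proof of Corollary \ref{prop:E_dynamics_corollary}. The only point worth flagging (a gloss the paper also makes) is that the stated hypothesis is equivariance of $f$, so strictly one must first invoke the fact that an equivariant system on a homogeneous space admits an equivariant lift (\cite{2020_mahony_EquivariantSystems}) before applying the $\Lambda$-equivariance identity.
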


\subsection{EqR Design}
\label{sec:eqr_design}

The main contribution of this paper is the development of the Equivariant Regulator.
Once the local linearised error dynamics \eqref{eq:linearisation} have been determined, the EqR is obtained by applying an LQR design to stabilise the error trajectory about the origin.
That is, the control input $\tilde{u}$ is chosen so as to minimise the cost functional
\begin{align}
	J_{\varepsilon, \tilde{u}} \coloneqq \varepsilon(t_f)^\top \mr F \varepsilon(t_f) + \int_{0}^{t_f} \varepsilon^\top \mr Q \varepsilon + \tilde{u}^\top S\tilde{u} 	  \td \tau \label{eq:eqr_cost},
\end{align}
where $t_f \in \R$ is a finite-time horizon,  $\mr F$ and $\mr Q$ are positive semi-definite matrices and $S$ is a positive definite matrix.
The matrices $\mr Q$ and $S$ may also be time-varying.

The cost functional \eqref{eq:eqr_cost} for the system dynamics \eqref{eq:linearisation} is minimised with the input (\cite{anderson2007optimal})
\begin{align}
	\tilde{u} = -K\varepsilon \label{eq:optimal_u},
\end{align}
where $K = S^{-1}B^\top P$ and $P(t)$ is the solution of the Riccati differential equation
\begin{align}
	 \dot{P} &= 	-A^\top P - PA +PBS^{-1}B^\top P - \mr Q, & P(t_f) = \mr F(t_f). \label{eq:LQR_gain}	
\end{align}
The input $u \coloneqq \tilde{u} + u_d$ is then applied as an input to the original system.
\begin{remark}
	The solution $P(t)$ of \eqref{eq:LQR_gain} always exists for the finite time-horizon cost functional posed in \eqref{eq:eqr_cost} \cite[pg. 24]{anderson2007optimal}.
	In the infinite time-horizon case, complete controllability of the linearised error system \eqref{eq:linearisation} is sufficient for the existence of $P(t)$ \cite[pg. 36-37]{anderson2007optimal}.
\end{remark}

\begin{algorithm}
	\caption{EqR Design Preliminaries}
	\begin{algorithmic}[1]
		\Statex Given a system $f: \vecL \to \gothX(\calM)$ and trajectory $\xi_d(t)$;
		\State Find a Lie group with group action $\phi : \grpG \times \calM \to \calM$.
		\State Check for equivariance; if found, compute the input action $\psi: \grpG \times \vecL \to \vecL$.
		\State Compute an (equivariant) lift function $\Lambda: \calM \times \vecL \to \gothg$.
		\State Choose an origin $\mr \xi$ and a local chart $\chi$.
		\State Compute the lifted kinematics $\dot{X}$ and the lifted trajectory $X_d(t)$ (Eq~\ref{eq:group_dynamics}).
		\State Compute the linear matrices $A(t)$ and $B(t)$ (Eq~\ref{eq:linearisation}).
		\State Choose the LQR cost functional matrices $\mr F$, $\mr Q$, $S$.
		\State Precompute the LQR gain $K(t)$ by solving the Riccati equation (Eq~\ref{eq:LQR_gain}).
	\end{algorithmic}
	\label{alg:eqr_algorithm}
\end{algorithm}

Algorithm \ref{alg:eqr_algorithm} summarises the EqR design method.
The gain $K(t)$ does not depend on $\xi$ and so can be solved for ahead-of-time.
The EqR operates on the local error states, so at every time step the control signal is computed from the LQR gain $K$ and $\varepsilon$ (Algorithm \ref{alg:time_step}).

\begin{algorithm}
	\caption{EqR Controller}
	\begin{algorithmic}[1]
		\Statex At time $t$,
		\State  $\xi_e(t) = \phi(X_d^{-1}(t), \xi(t))$ (Eq~\ref{eq:error_def}).
		\State  $\varepsilon(t) = \chi(\xi_e(t))$ (Eq~\ref{eq:vareps_def}).
		\State $\tilde{u}(t) = -K(t)\varepsilon(t)$ (Eq~\ref{eq:optimal_u}).
		\State $u(t) = u_d(t) + \tilde{u}(t)$.
	\end{algorithmic}
	\label{alg:time_step}
\end{algorithm}
\section{Example}
As an illustrative example of the approach, consider the reduced attitude (\cite{Chaturvedi2011}) problem, extended to the control of an underactuated flying robot constrained to thrust in a single direction.
This is an interesting example, as directional thrust control is naturally posed on the sphere.
By considering only the reduced attitude along with the usual thrusting rigid body kinematics, the entire system state can be posed on the manifold $\Sph^2 \times \R^3 \times \R^3$ (which is a homogeneous space, but not a Lie group),
\begin{subequations}\label{eq:s2_dynamics}
	\begin{align}
		\dot{\eta} & = \eta \times \Omega       \\
		\dot{v}    & = -\frac{T}{m}\eta + g e_3 \\
		\dot{x}    & = v.
	\end{align}
\end{subequations}

where the states are the reduced attitude $\eta \in \Sph^2$, the velocity $v \in \R^3$ and the position $x \in \R^3$.
The inputs are the body-frame angular velocity $\Omega \in \R^3$ and the thrust $T \in \R$.
The gravity $g \in \R$ and mass $m \in \R$ are known.

The manifold $\Sph^2 \times \R^3 \times \R^3$ is a homogeneous space under action by the symmetry group $\SE_2(3)$.
For details on the Lie group $\SE_2(3)$, the reader is referred to \cite{2014_Bonnabel}.
Let $(R_X, v_X, x_X)$ denote the components of the group element $X \in \SE_2(3)$.
The group $\SE_2(3)$ acts on $\Sph^2 \times \R^3 \times \R^3$ via the map $\phi: \SE_2(3) \times (\Sph^2 \times \R^3 \times \R^3) \to \Sph^2 \times \R^3 \times \R^3$
defined by
\begin{align*}
	\phi((R_X, v_X, x_X), (\eta, v, x)) = (R_X \eta, R_X v + v_X, R_X x + x_X).
\end{align*}

A lift $\Lambda : \Sph^2 \times \R^3 \times \R^3 \times \vecL \to \se_2(3)$ can be computed as
\begin{align*}
	\Lambda(\xi, u) \coloneqq (-\Omega^{\times},  \quad \Omega \times v - \frac{T}{m}\eta + ge_3, \quad \Omega\times x + v).
\end{align*}
Define the stereographic projection map $\sigma: \Sph^2 \to \R^2$ by $\phi(\eta_1, \eta_2, \eta_3) = (\frac{\eta_1}{\eta_3 + 1}, \frac{\eta_2}{\eta_3 + 1})$.
Construct a chart $\chi :\Sph^2 \times \R^3 \times \R^3 \to \R^8$ by the map $(\eta, v, x) \mapsto (\sigma(\eta), v, x)$.
With this choice of coordinates, a natural choice of origin is $\mr \xi = (\mr\eta, \mr v, \mr x) = (e_3, 0, 0)$.

\subsection{Lifting $\xi_d$ onto $\SE_2(3)$}

The dynamics of $X$ are then given by the lifted system
\begin{align*}
	\dot{X} = (-\Omega^{\times}R_X,  - \frac{T}{m}\phi(R_X, \mr\eta) + ge_3,  \quad \phi(v_X, \mr v)).
\end{align*}
Given an arbitrary feasible trajectory pair $(\xi_d(t), u_d(t))$, a corresponding trajectory on $\SE_2(3)$ must be found.
This can be achieved by finding an $X_d(0)$ such that \mbox{$\phi_{X_d(0)}(\mr{\xi}) = \xi_d(0)$} and then integrating the lifted dynamics with the known input $u_d(t)$.

\subsection{Linearised error dynamics}

Following \S\ref{sec:local_eps_dynamics}, the linearised local error dynamics are computed to be
\begin{align}
	\dot{\varepsilon} & = \begin{pmatrix}
		0 & 0 & 0 \\ \frac{-T_d}{m}\begin{pmatrix} 2 & 0\\ 0 & 2 \\ 0 & 0 \end{pmatrix} & \Omega_d^\times & 0 \\ 0 & \Id & \Omega_d^\times
	\end{pmatrix} \varepsilon +  \begin{pmatrix} \begin{pmatrix}0 & -\frac{1}{2} & 0 \\ \frac{1}{2} & 0 & 0  \end{pmatrix} & 0 \\ 0 & -\frac{1}{m} e_3 \\ 0 & 0 \end{pmatrix} \tilde{u}. \label{eq:linearised_s2}
\end{align}

\vspace{-2mm}
\section{Simulation}
\vspace{-2mm}
In order to verify the performance of the EqR, we compare the tracking performance in simulation to the standard technique of using reprojected error coordinates as input into an LQR.
This is a common technique technique, viewable, for example, in \cite{2018_Scaramuzza_onboard_LQR}, where it was applied to a quaternion state.
This controller is denoted by P-LQR in the remainder of the paper.
\subsection{Trajectory generation on $\Sph^2 \times \R^3 \times \R^3$}
The kinematics \eqref{eq:s2_dynamics} are clearly differentially flat in the position $x_d$ due to the directional thrust constraint.
Let $x_d$ be an arbitrary $\mathcal{C}^3$ curve in $\R^3$. Then $v_d = \dot{x}_d$, $\dot{v}_d = \ddot{x}_d$, $\ddot{v}_d = \dddot{x}_d$.
Thrust is a scalar value, so $T = m\lVert(-\dot{v} + ge_3) \rVert$, $\dot{T} = m\frac{(\ddot{v}^\top(\dot{v} - ge_3))}{T}$, $\eta_d = \frac{m(-\dot{v} + ge_3)}{T}$ and $\dot{\eta}_d = m\dot{T}\frac{\dot{v} - ge_3}{T^2}-m\frac{\ddot{v}}{T}$.
The angular velocity $\Omega_d$ must be chosen so that $\dot{\eta}_d = \eta_d \times \Omega_d$; this is underdetermined and $\Omega_d = \dot{\eta}_d \times \eta_d + \beta(\eta_d, \dot{\eta}_d) \eta_d$ is a solution for any arbitrary function $\beta: \Sph^2 \times T\Sph^2 \to \R$.
We will choose $\beta = 0$ in the following development.

\subsection{Linearised system used for P-LQR comparison}
\label{sec:ambient_error}
\vspace{-2mm}
The error coordinates for P-LQR are the element-wise differences of the state $\tilde{\xi} \in \R^9$, defined by
\begin{align*}
	\tilde{\xi} = (\eta - \eta_d, v - v_d, x - x_d),
\end{align*}

The projection operator $\mathbb{P}_{\eta}: \R^3 \to \tT_{\xi}\Sph^2$ is defined by $\mathbb{P}_\eta = \Id - \eta \eta^\top$, so the projection operator \mbox{$\mathbb{P}: \R^9 \to T_{(\eta, v, x)} (\Sph^2 \times \R^3 \times \R^3)$} is defined by
\begin{align*}
	\mathbb{P} =
	\begin{pmatrix}
		\Id - \eta_d \eta_d^\top & 0   & 0   \\
		0                        & \Id & 0   \\
		0                        & 0   & \Id
	\end{pmatrix}
\end{align*}
Note that $\pr^\top = \pr$.
Thus, the linearised system is:
\begin{align}
	\dot{\tilde{\xi}}
	 & = \tD_{\xi | \xi_d} f_{u_d}\mathbb{P}[\tilde{\xi}] + \tD_{u | u_d} f_{\xi_d, u}[\tilde{u}] \notag \\
	 & =\begin{pmatrix}
		\Omega_d^\times (\Id - \eta_d\eta_d^\top) & 0 & 0 \\
		-\frac{T_d}{m}(\Id - \eta_d\eta_d^\top)   & 0 & 0 \\
		0                                         & I & 0\end{pmatrix}\tilde{\xi} +
	\begin{pmatrix}
		\eta_d^\times & 0                 \\
		0             & -\frac{\eta_d}{m} \\
		0             & 0
	\end{pmatrix}\tilde{u}. \label{eq:proj_system}
\end{align}
\subsection{LQR design and EqR parameter design}
We choose $m = 1.2$ for the simulation.
The system \eqref{eq:proj_system} is tracked with a standard LQR design; that is, the cost functional to be minimised is
\begin{align}
(\xi(t_f) - &\xi_d(t_f))^\top \pr F \pr (\xi(t_f) - \xi_d(t_f)) \notag \\
&+ \int_0^{t_f} (\xi - \xi_d)^\top \pr Q \pr (\xi - \xi_d) + \tilde{u}^\top S \tilde{u} \td\tau,\label{eq:PLQR_cost_functional}
\end{align}
where $\xi$ and $\xi_d$ are treated as embedded coordinates in $\R^9$
and the projection $\pr$ ensures that the cost is well conditioned as a cost on the reduced attitude problem.

The $F$ and $Q$ matrices are chosen to be
\begin{align*}
	F = Q = \diag(1.0, 1.0, 1.0, 2.0, 2.0, 2.0, 0.1, 0.1, 0.1),
\end{align*} and the $S$ matrix is chosen as $S = \diag(0.5, 0.5, 0.5, 0.5)$.

For a fair comparison between P-LQR and EqR performance, the EqR weights must be chosen to represent an equivalent infinitesimal cost to that used in the cost functional \eqref{eq:PLQR_cost_functional}.

One has
\begin{align*}
\pr(\xi - \xi_d) &= \pr(\phi(X_d, \chi^{-1}(\varepsilon)) - \xi_d)\\
&= \tD_{\varepsilon|0}\pr(\phi(X_d, \chi^{-1}(\varepsilon)) - \xi_d)[\varepsilon] + \cal O(\varepsilon ^ 2)\\
&= \pr\tD\phi_{X_d}\tD\chi^{-1}_{|0}[\varepsilon] + \cal O(\varepsilon ^ 2),
\end{align*}
so
\begin{align*}
	(\xi - \xi_d)^\top \pr Q \pr &(\xi - \xi_d)\\
	&\approx \varepsilon^\top   (\tD\chi^{-1}_{|0})^\top \tD\phi_{X_d}^\top \pr Q \pr\tD\phi_{X_d}\tD\chi^{-1}_{|0}\varepsilon.
\end{align*}
Since $\pr \tD\phi_{X_d} = \tD\phi_{X_d}$, it follows that
\begin{align*}
\mr Q & = (\tD\chi^{-1}_{|0})^\top \tD\phi_{X_d}^\top Q \tD\phi_{X_d}\tD\chi^{-1}_{|0}, \\
\mr F & = (\tD\chi^{-1}_{|0})^\top \tD\phi_{X_d}^\top F \tD\phi_{X_d}\tD\chi^{-1}_{|0}.
\end{align*}
Note that while $Q$ is a constant matrix, the EqR weight matrix $\mr Q$ is time-varying.

\subsection{Helix trajectory and transient response}
A helix is a commonly used simulation trajectory, requiring a low-frequency but non-constant angular rate to follow correctly.
The goal of this simulation is to investigate the transient tracking tracking response of a helical trajectory with initial condition perturbation.
The helix trajectory is defined by $x_d(t) = (\frac{1}{2}\cos t, \frac{1}{2}\sin t, t)$.
The initial bearing of the desired trajectory is (0.0509, 0, 0.999); for the simulation, $\eta(0)$ is iteratively selected over $\Sph^2$.
The states $x(0)$ and $v(0)$ are set to $x_d(0)$ and $v_d(0)$, respectively.
The results are shown in  Figure~\ref{fig:EQR_heat} and discussed in \S\ref{sec:discussion}.

\begin{figure}	\includegraphics[width=0.5\linewidth]{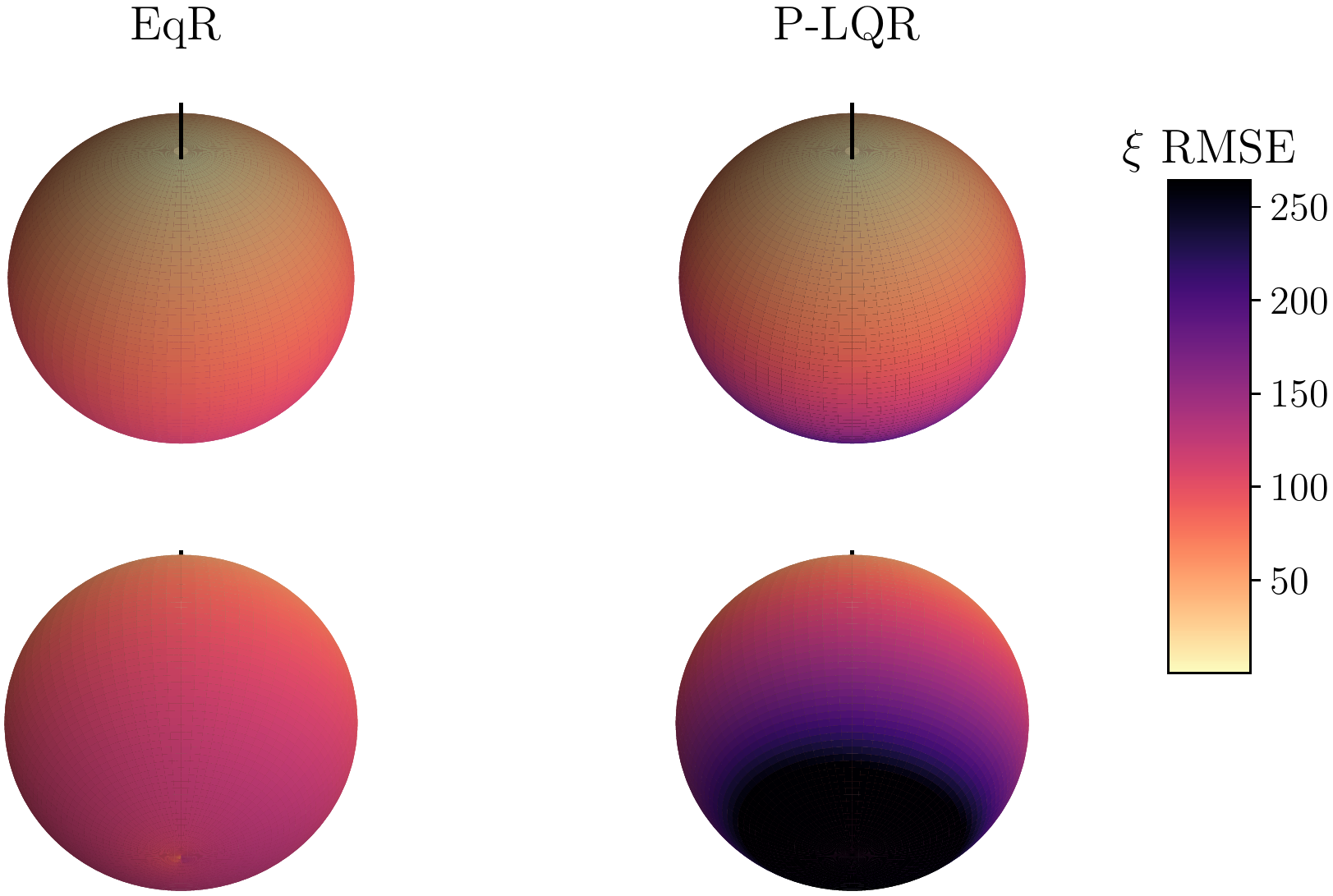}
	\centering
	\caption{Heat map showing total RMSE over the closed-loop trajectory versus initial condition of the true system on the sphere for EqR and P-LQR algorithms.
The desired initial system bearing is denoted by a black line.
Initial conditions far from the desired initial bearing lead to large transients and correspondingly large RMSE.
}	\label{fig:EQR_heat}
\end{figure}

To capture the differences in transient response for large initial error we plot the transient responses for a sample offset of $\theta~=~3.0, \phi~=~1.6$ (Figure~\ref{fig:helix_track_sample}).

\begin{figure}	\includegraphics[width=0.3\linewidth]{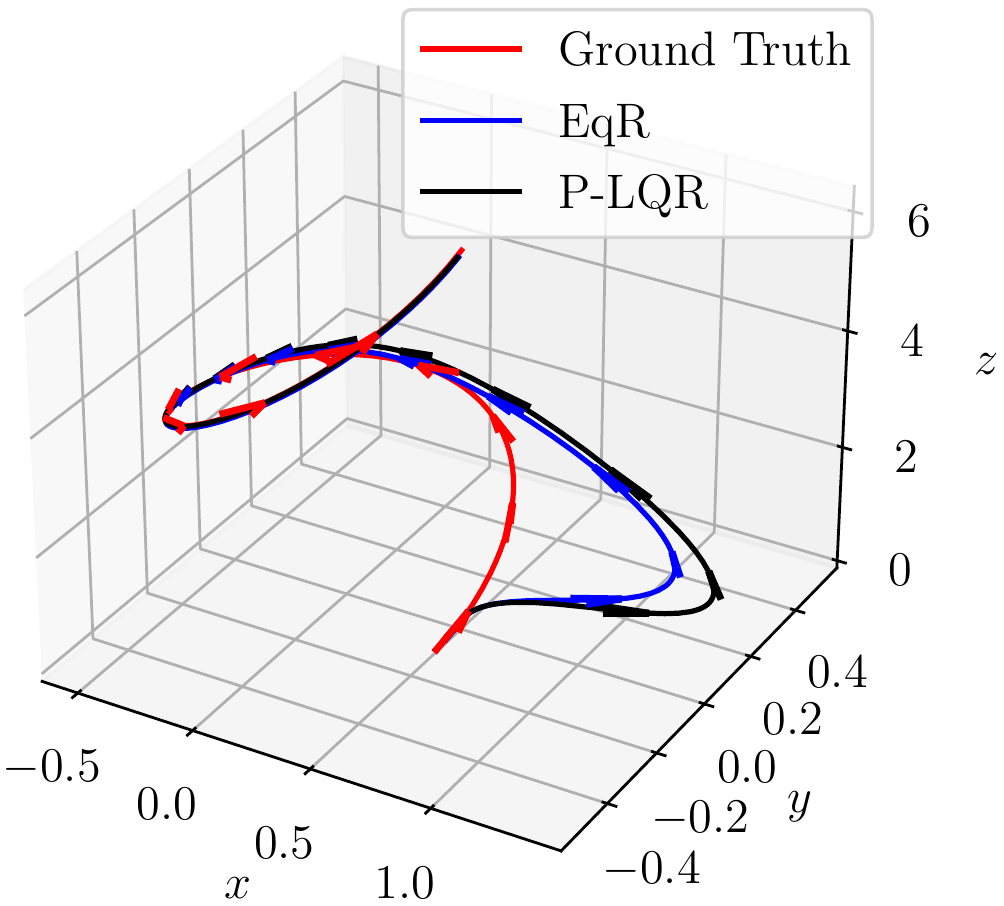}
	\centering
	\caption{Typical trajectory for large initial bearing error.
Note the large transient in both algorithms as the control action acts to correct the initial error and track the helix.
The EqR clearly outperforms the P-LQR algorithm.
}
	\label{fig:helix_track_sample}
\end{figure}

\subsection{Discussion}
\label{sec:discussion}

As is seen in Figure~\ref{fig:EQR_heat} both controllers show similar performance for small initial error.
Greater differences are seen when the initial bearing error is large.
The sample trajectory plotted in Figure~\ref{fig:helix_track_sample} shows a typical response, where the EqR demonstrates superior transient response to the P-LQR algorithm.
EqR was found to converge for all initial perturbations (except for anti-nodal point $\eta_e = (0, 0, -1)$, a chart singularity), whereas P-LQR often failed to converge for large initial $\eta$ values.

Contrasting the linearised systems \eqref{eq:linearised_s2} and \eqref{eq:proj_system}, the EqR system is significantly simpler: it is ``almost" constant, as the system is time-varying only in the inputs $\Omega_d$ and $T_d$, whereas the P-LQR system is also time-varying in the state $\eta_d$.
Thus, EqR is more computationally efficient for this example - the $B$ matrix only needs to be computed once.
This is due to substituted goal in EqR of driving the error trajectory to the fixed point $\mr \xi$ and so computational efficiency should be a general feature of this approach.

\section{Conclusion}

We have shown that, given an input-affine system on a homoegeneous space, a choice of origin allows for a system trajectory to be lifted into a trajectory on the symmetry group.
The group action of such a lifted trajectory on the system state allows for an intrinsic error definition with respect to a feasible trajectory on the manifold.
We have utilised this error to propose the Equivariant Regulator (EqR) as a general tracking controller design for such systems.
The EqR does not require the system manifold itself to be a Lie group and so applies to a broader class of problems than previous approaches.
A simple example of a system on a homogeneous space that is inadmissable for the usual Lie group method has been presented and shows favourable performance compared to a standard LQR tracking controller.

\bibliographystyle{plainnat}
\bibliography{references}

\end{document}